\newtheorem{assumption}{Assumption}
\def\ProblemSpecBox{
  \@ifnextchar[\ProblemSpecBox@opt{\ProblemSpecBox@noopt}}
\def\ProblemSpecBox@opt[#1]#2{
  \protected@edef\@currentlabelname{#1}
  \protected@edef\@currentlabel{#1}
  \begin{mdframed}[
    innerleftmargin=5pt,
    innerrightmargin=5pt,
    innertopmargin = 5pt,
    innerbottommargin=5pt,
    skipabove=\dimexpr\topsep+\ht\strutbox\relax,
    roundcorner=2pt,
    frametitle={#2},
    frametitlerule=true,
    backgroundcolor=gray!7,
    frametitlebackgroundcolor=gray!20]
}
\def\ProblemSpecBox@noopt#1{
  \ProblemSpecBox@opt[#1]{#1}
}
\def\endProblemSpecBox{
  \end{mdframed}
}
\begin{document}
\title{Tight Short-Lived Signatures\thanks{This work appeared at IEEE Euro S\&P 2023 as Poster~\cite{arup_mondal_2023_8197649}.}}
%
%

\author{Arup Mondal\inst{1} \and Ruthu Hulikal Rooparaghunath\inst{2} \and Debayan Gupta\inst{1}}
%

%

\institute{Ashoka University, Sonipat, Haryana, India \\
\email{\{arup.mondal\_phd19, debayan.gupta\}@ashoka.edu.in}\\ \and
Vrije Universiteit,
Amsterdam, The Netherlands \\
\email{r.rooparaghunath@student.vu.nl}}

\maketitle              


\begin{abstract}
A Time-lock puzzle (TLP) sends information into the future: a predetermined number of sequential computations must occur (i.e., a predetermined amount of time must pass) to retrieve the information, regardless of parallelization.
Buoyed by the excitement around secure decentralized applications and cryptocurrencies, the last decade has witnessed numerous constructions of TLP variants and related applications (e.g., cost-efficient blockchain designs, randomness beacons, e-voting, etc.).

In this poster, we first extend the notion of TLP by formally defining the ``time-lock public key encryption'' (TLPKE) scheme. Next, we introduce and construct a ``tight short-lived signatures'' scheme using our TLPKE. Furthermore, to test the validity of our proposed schemes, we do a proof-of-concept implementation and run detailed simulations. 

\keywords{Time-Lock Puzzle \and Short-Lived Signature.}
\end{abstract}


\section{Introduction}\label{sec:intro}

A short-lived signature (SLS) provides a verifier with two possibilities: either a generated signature $\sigma$ on $m$ is correct, or a user has expended a minimum predetermined amount of sequential work ($T$ steps or time) to forge the signature. In other words, the signatures created remain valid for a short period of time $T$.
Formally, we define the unforgeability period as the time starting from when a signer creates a signature for a message using their private signing information (or key) and the \texttt{Sign} algorithm. Once the unforgeability period $T$ has elapsed, anyone can compute a forged signature using some public signing information and the \texttt{ForgeSign} algorithm.

Recall that any party can compute a forged signature (using \texttt{ForgeSign}) after the unforgeability period $T$ has passed. However, there is no guarantee that a party cannot generate a forged signature in \textit{advance}. To ensure this, we use the same model used in~\cite{DBLP:journals/iacr/ArunBC22}. The signature incorporates a random beacon value to ensure it was not created before a specific time $T_0$. Suppose a verifier observes the signature within $\hat{T}$ units of time after $T_0$. In this case, they will believe it is a valid signature if $\hat{T} < T$ because it would be impossible to have forged the signature within that time period. Once $\hat{T} \geq T$, the signature is no longer convincing as it may have been constructed through forgery.

\subsubsection*{Brief Concurrent Work}
Recently, Arun et al.~\cite{DBLP:journals/iacr/ArunBC22} studied the variants notion of short-lived cryptographic primitives, i.e., short-lived proofs and signatures. Similar to our work, they make use of sequentially-ordered computations ($T$-sequential computation) as a means to enforce time delay during which signatures are unforgeable but become forgeable afterward ($(1 + c) \cdot T$-sequential computations). 
In this work, we use the same models as used in~\cite{DBLP:journals/iacr/ArunBC22}, however, we define and construct \textbf{tight} short-lived signatures, where the forged signatures can be generated in time not much more than sequentially bound $T$. In other words, tight short-lived signatures ensure that forged signatures can be generated in \textbf{exactly} $T$ sequential computations. 

\begin{table}[ht]
\centering
\scriptsize
\caption{Complexity comparison of SLS schemes.}
\label{table:slscompare}
\begin{tabular}{l|c|c|c|c}
\hline
Paper & \shortstack{Setup \& Sign} & Forge Sign & Verify & Tight \\ \hline

Arun et al.~\cite{DBLP:journals/iacr/ArunBC22}      & $\textsf{poly}(\lambda)$     & $O((1+c) \cdot T)$ & VDF~(\cite{DBLP:journals/iacr/Pietrzak18a,DBLP:conf/eurocrypt/Wesolowski19})        & No          \\ \hline

Algorithm  & $\textsf{poly}(\lambda)$  & $O(T)$  & $O(1)$ & Yes           \\ \hline
\end{tabular}
\end{table}

\noindent
Short-lived cryptographic primitives have many real-life use cases; we refer to~\cite{DBLP:journals/iacr/ArunBC22} for a  detailed discussion of its applications.
\textit{Our main contributions are summarised as follows:}

\begin{itemize}
    \item First, we extend the time-lock puzzle~\cite{rivest1996time} by formally defining the ``time-lock public key encryption'' (TLPKE) scheme, and demonstrate a construction using a repeated squaring assumption in a group of unknown order (Sec.~\ref{sec:tlpke}).
    
    \item We introduce and construct a ``tight short-lived signature'' scheme from our TLPKE scheme (Sec.~\ref{sec:tsls}).
    
    \item We conduct a proof-of-concept implementation study and analyze the performance of our construction (Sec.~\ref{sec:tsls}).
\end{itemize}

\section{Technical Preliminaries}

\subsubsection*{Basic Notation}
Given a set $\mathcal{X}$, we denote by $x \overset{\$}{\leftarrow} \mathcal{X}$ the process of sampling a value $x$ from the uniform distribution on $\mathcal{X}$. Supp($\mathcal{X}$) denotes the support of the distribution $\mathcal{X}$. 
We denote by $\lambda \in \mathbb{N}$ the security parameter. A function \texttt{negl}: $\mathbb{N} \rightarrow \mathbb{R}$ is negligible if it is asymptotically smaller than any inverse-polynomial function, namely, for every constant $\epsilon > 0$ there exists an integer $N_{\epsilon}$ and for all $\lambda > N_{\epsilon}$ such that $\texttt{negl}(\lambda) \leq \lambda^{-\epsilon}$.

\subsubsection*{Number Theory}
We assume that $N=p \cdot q$ is the product of two large secret and \textit{safe} primes and $p \neq q$. We say that $N$ is a strong composite integer if $p = 2p' + 1$ and $q = 2q' +1$ are safe primes, where $p'$ and $q'$ are also prime. We say that $\mathbb{Z}_N$ consists of all integers in $[N]$ that are relatively prime to $N$ (i.e., $\mathbb{Z}_N = \{x \in \mathbb{Z}_N: \texttt{gcd}(x, N)=1\}$).

\subsubsection*{Repeated Squaring Assumption}
The repeated squaring assumption~\cite{rivest1996time} roughly says that there is no parallel algorithm that can perform $T$ squarings modulo an integer $N$ significantly faster than just doing so sequentially, assuming that $N$ cannot be factored efficiently, or in other words \texttt{RSW} assumption implies that factoring is hard. More formally, no adversary can factor an integer $N = p \cdot q$ where $p$ and $q$ are large secrets and ``safe'' primes (see \cite{DBLP:journals/iacr/Pietrzak18a} for details on ``safe'' primes). Repeated squaring \texttt{RSW = (Setup, Sample, Eval)} is defined below. Moreover, we define a trapdoor evaluation \texttt{RSW.tdEval} (which enables \textit{fast} repeated squaring evaluation), from which we can derive an actual output using trapdoor in $\textsf{poly}(\lambda)$ time.

\begin{mdframed}
\scriptsize
    $\bullet$ $N \leftarrow \texttt{RSW.Setup}(\lambda)$ : Output $\textsf{pp} = (N)$ where $N = p \cdot q$ as the product of two large ($\lambda$-bit) randomly chosen secret and safe primes $p$ and $q$.
    
    \noindent
    $\bullet$ $x \leftarrow \texttt{RSW.Sample}(\textsf{pp})$ : Sample a random instance $x$.
    
    \noindent
    $\bullet$ $y \leftarrow \texttt{RSW.Eval}(\textsf{pp}, T, x)$ : Output $y = x^{2^{T}} \mod N$ by computing the $T$ sequential repeated squaring from $x$. 
    
    \noindent
    $\bullet$ $y \leftarrow \texttt{RSW.tdEval}(\textsf{pp}, \textsf{sp} = \phi(N), x)$ : To compute $y = x^{2^{T}} \mod N$ \textit{efficiently} using the trapdoor as follows:
    \begin{enumerate}
        \item[-] Compute $v = 2^T \mod \phi(N)$. \\
        \textbf{Note:} $(2^T \mod \phi(N)) \ll 2^T$ for large $T$
            
        \item[-] Compute $y = x^v \mod N$. \\
        \textbf{Note:} $x^{2^{T}} \equiv x^{(2^T \mod \phi(N))} \equiv x^v \pmod{N}$.
    \end{enumerate}
\end{mdframed}

\begin{assumption}[$T$-Repeated Squaring Assumption without Trapdoor~\cite{rivest1996time}]\label{assump:assump1}
	For every security parameter $\lambda \in \mathbb{N}$, $N \in \emph{Supp}(\texttt{\emph{RSW.Setup}}(\lambda))$, $x \in \emph{Supp}(\texttt{\emph{RSW.Sample}}(N))$, and a time-bound parameter $T$, computing the $x^{2^T} \mod N$ without knowledge of a trapdoor or secret parameter \textsf{\emph{sp}} using the \texttt{\emph{RSW.Eval}} algorithm requires $T$-sequential time for algorithms with $\textsf{\emph{poly}}(\log(T), \lambda)$-parallel processors.
\end{assumption}

\section{Time-Lock Public Key Encryption}\label{sec:tlpke}

The notion of time-sensitive cryptography was introduced by Rivest, Shamir, and Wagner~\cite{rivest1996time} in 1996, in ``Time-lock puzzles and timed-release Crypto'' (TLP). They presented a construction using repeated squaring in a finite group of unknown order, resulting in an encryption scheme. This scheme allows the holder of a trapdoor to perform ``fast'' encryption or decryption, while others without the trapdoor can only do so slowly (requiring $T$ sequential computations).

For the purpose of our tight short-lived signature protocol, we require and define a variation of TLP. We follow the definitions given in~\cite{rivest1996time}, altered to fit the public key encryption paradigm, rather than symmetric key encryption.
This variation, which we refer to as ``Time-Lock Public Key Encryption'' (TLPKE)\footnote{TLP with public key encryption instead of symmetric key encryption}, can be described as a ``public key encryption scheme with sequential and computationally intensive derived private key generation''.

\subsubsection*{Protocol}
The formal details of our TLPKE construction from repeated squaring, is \texttt{TLPKE} = (\texttt{Setup}, \texttt{Eval}, \texttt{Encrypt}, \texttt{Decrypt}) specified in Algorithm~\ref{algo:tlpke}.

\begin{figure}
\begin{ProblemSpecBox}[1]{\small Algorithm 1: Time-Lock Public Key Encryption}
\scriptsize
\label{algo:tlpke}

        $\bullet$ \texttt{TLPKE.Setup}($\lambda, T$)
        \begin{enumerate}
            \item Call and generate $N \leftarrow \texttt{RSW.Setup}(\lambda)$
            
            \item Generate an input $x \in \mathbb{Z}_N^* \leftarrow \texttt{RSW.Sample}(\textsf{pp})$ 
            
            \item Generates a key pair $(pk, sk)$ for a semantically secure public-key encryption scheme: $\texttt{PKE} = (\texttt{GenKey, Enc, Dec})$.
            
            \item Encrypt the $sk$ as $ek = sk + x^{2^T} \mod N$
            
            \item Compute $y = x^{2^T} \mod N \in \mathbb{Z}_N^*$ \textit{efficiently} using the trapdoor evaluation $\texttt{RSW.tdEval}(\textsf{pp}, \textsf{sp} = \phi(N), x)$.
            
            \item \textbf{return} $\textsf{pp} = (N, T, x, pk, ek)$.
        \end{enumerate}
        
        \noindent
        $\bullet$ \texttt{TLPKE.Eval}(\textsf{pp})
        \begin{enumerate}
            \item Compute $y = x^{2^T} \mod N \in \mathbb{Z}_N^*$ using $\texttt{RSW.Eval}(\textsf{pp}, T, x)$
            \item Extract the decryption key $sk = ek - y$
            \item \textbf{return} $(y, sk)$ 
        \end{enumerate}
        
        \noindent
        $\bullet$ \texttt{TLPKE.Encrypt}($\textsf{pp}, M$)
        \begin{enumerate}
            \item Encrypt a message $M$ with key $pk$ and a standard encryption \texttt{Enc}, to obtain the ciphertext $C_M = \texttt{Enc}(pk, M \parallel x)$ and \textbf{return} $C_M$.
        \end{enumerate}
        
        \noindent
        $\bullet$ \texttt{TLPKE.Decrypt}($\textsf{pp}, sk, y, C_M$)
        \begin{enumerate}
            \item Decrypt the message as $M \parallel x = \texttt{Dec}(sk, C_M)$
            \item Parse $M \parallel x$ and \textbf{return} the message $M$
        \end{enumerate}
\end{ProblemSpecBox}

\end{figure}

\section{Tight Short-Lived Signature}\label{sec:tsls}

\subsubsection*{Syntax and Security Definitions}
Here, we recall and modify the definition of short-lived signatures (SLS) from~\cite{DBLP:journals/iacr/ArunBC22} and define our tight SLS as follows:

\begin{definition}[Tight Short-Lived Signatures]\label{def:sls}
Let $\lambda \in \mathbb{N}$ be a security parameter and a space of random beacon $\mathcal{R} \geq 2^{\lambda}$. A short-lived signature \texttt{\emph{SLS}} is a tuple of four probabilistic polynomial time algorithms \texttt{\emph{(Setup, Sign, ForgeSign, Verify)}}, as follows:
\begin{itemize}
    \item[$\bullet$] $\texttt{\emph{Setup}}(\lambda) \rightarrow (\textsf{\emph{pp}}, sk)$, is randomized algorithm that takes a security parameter $\lambda$ and outputs public parameters $\textsf{\emph{pp}}$ and a secret key $sk$ (the $sk$ can \textbf{only} be accessed by the \texttt{\emph{SLS.Sign}} algorithm). The public parameter $\textsf{\emph{pp}}$ contains an input domain $\mathcal{X}$, an output domain $\mathcal{Y}$, and time-bound parameter $T$.
    
    \smallskip
    \item[$\bullet$] $\texttt{\emph{Sign}}(\textsf{\emph{pp}}, m, r, sk) \rightarrow \sigma$, takes a public parameter $\textsf{\emph{pp}}$, a secret parameter $sp$, a message $m$ and a random beacon $r$, and outputs (in time less than the predefined time bound $T$) a signature $\sigma$.
    
    \smallskip
    \item[$\bullet$] $\texttt{\emph{ForgeSign}}(\textsf{\emph{pp}}, m, r) \rightarrow \sigma$, takes a public parameter $\textsf{\emph{pp}}$, a message $m$ and a random beacon $r$, and outputs (in time \textbf{exactly} $T$) a signature $\sigma$.
    
    \smallskip
    \item[$\bullet$] $\texttt{\emph{Verify}}(\textsf{\emph{pp}}, m, r, \sigma) \rightarrow \{\textsf{\emph{accept}}, \textsf{\emph{reject}}\}$, is a deterministic algorithm takes a public parameter $\textsf{\emph{pp}}$, a message $m$ and a random beacon $r$ and a signature $\sigma$, and outputs \textsf{\emph{accept}} if $\sigma$ is the correct signature on $m$ and $r$, otherwise outputs \textsf{\emph{reject}}.
\end{itemize}
\end{definition}

\noindent
A $\texttt{SLS}$ must satisfy the three properties \textbf{Correctness} (Definition~\ref{definition:slscorrect}), \textbf{Existential Unforgeability} (Definition~\ref{definition:slsexunforg}), and \textbf{Indistinguishability} (Definition~\ref{definition:slsindisnt}) as follows:

\begin{definition}[Correctness]\label{definition:slscorrect}
A \texttt{\emph{SLS}} is correct (or complete) if for all $\lambda \in \mathbb{N}$, $m$, and $r \in \mathcal{R}$  it holds that,
\begin{equation*}
    \scriptsize
    \emph{Pr} \left[ 
    \begin{aligned}
        \texttt{\emph{Verify}}(\textsf{\emph{pp}},m,r,\sigma) = \textsf{\emph{accept}}    
    \end{aligned}
    \middle \vert 
    \begin{aligned}
        & \texttt{\emph{Setup}}(\lambda) \rightarrow (\textsf{\emph{pp}}, sk) \\
        & \texttt{\emph{Sign}}(m,r,sk) \rightarrow \sigma
    \end{aligned}
    \right]
    =1
\end{equation*}
\end{definition}

\begin{definition}[$T$-Time Existential Unforgeability]\label{definition:slsexunforg}
A \texttt{\emph{SLS}} has $T$-time existential unforgeability if $\forall$ $\lambda, T \in \mathbb{N}$, $m$ and $r \in \mathcal{R}$, and all pairs of PPT algorithms $(\mathcal{A}, \mathcal{A'})$, such an $\mathcal{A}$ (offline) can run in total time $\textsf{\emph{poly}}(T, \lambda)$ and in a parallel running time of $\mathcal{A'}$ (online) on at most \textsf{\emph{poly}}($\log T, \lambda$)-processors is less than $T$, there exists a negligible function \texttt{\emph{negl}} such that,
\begin{equation*}
    \scriptsize
    \emph{Pr} \left[ 
    \begin{aligned}
        & \texttt{\emph{Sign}}(m, r, sk) \rightarrow \sigma \\
        & \texttt{\emph{Verify}}(\textsf{\emph{pp}},m^*,r,\sigma^*) \\ & =    \textsf{\emph{accept}}   
    \end{aligned}
    \middle \vert
    \begin{aligned}
        & \texttt{\emph{Setup}}(\lambda) \rightarrow (\textsf{\emph{pp}}, sk) \\
        & \mathcal{A}(\textsf{\emph{pp}}, \lambda, T) \rightarrow \alpha \\
        & \mathcal{A'}(\textsf{\emph{pp}}, m^*, r, \alpha) \rightarrow \sigma^*
    \end{aligned}
    \right]
    \leq\texttt{\emph{negl}}(\lambda)
\end{equation*}

\end{definition}

\begin{definition}[Indistinguishability]\label{definition:slsindisnt}
A \texttt{\emph{SLS}} is computationally indistinguishable (and statistically indistinguishable; when taken over the random coins used by each algorithm and randomly generated private parameters) if for all $\lambda \in \mathbb{N}$, $m$, and $r \in \mathcal{R}$  it holds that,
\begin{equation*}
    \scriptsize
        \left \vert
        \begin{aligned}
        \emph{Pr} \left[
        \begin{aligned}
            \mathcal{A}(\textsf{\emph{pp}}, m, r, \sigma) = \textsf{\emph{accept}} \\
        \end{aligned}
        \middle \vert 
        \begin{aligned}
            & \texttt{\emph{Setup}}(\lambda) \rightarrow (\textsf{\emph{pp}}, sk) \\
            & \texttt{\emph{Sign}}(m, r, sk) \rightarrow \sigma \\
        \end{aligned}
        \right] - \\
        \emph{Pr} \left[
        \begin{aligned}
            \mathcal{A}(\textsf{\emph{pp}}, m, r, \sigma) = \textsf{\emph{accept}}
        \end{aligned}
        \middle \vert 
        \begin{aligned}
            & \texttt{\emph{Setup}}(\lambda) \rightarrow (\textsf{\emph{pp}}, sk) \\
            & \texttt{\emph{ForgeSign}}(\textsf{\emph{pp}}, m, r) \rightarrow \sigma \\
        \end{aligned}
        \right] \\
        \end{aligned}
        \right \vert
        \leq \texttt{\emph{negl}}(\lambda)
\end{equation*}
\end{definition}

\noindent
Our scheme, formalized in Definition~\ref{def:sls}, presents an efficient generalized framework for short-lived signatures (Algorithm~\ref{algorithm:sls}) that is compatible with all signature schemes. However, note that while our Indistinguishability definition (Definition~\ref{definition:slsindisnt}) compares distributions of output, some signature schemes are deterministic (e.g., BLS, RSA signature). In such cases, it is necessary for \texttt{Sign} and \texttt{Forge} to produce the exact signature (e.g., Schnorr signature) with overwhelming probability.

\subsubsection*{Protocol Design}

The formal construction of tight short-lived signatures \texttt{SLS} = (\texttt{Setup}, \texttt{Sign}, \texttt{ForgeSign}, \texttt{Verify}) using our TLPKE is specified in Algorithm~\ref{algorithm:sls}.

\begin{ProblemSpecBox}[2]{\small Algorithm 2: Tight Short-Lived Signatures from TLPKE}
\label{algorithm:sls}
\scriptsize
    $\bullet$ $\texttt{SLS.Setup}(\lambda)$
    \begin{enumerate}
        \item Call and generate $N \leftarrow \texttt{RSW.Setup}(\lambda)$
            
        \item Generate an input $x \in \mathbb{Z}_N^* \leftarrow \texttt{RSW.Sample}(\textsf{pp})$.

        \item Choose a time bound parameter $T \in T(\lambda)$.
            
        \item Generates a key pair $(pk, sk) = \mathrm{\Pi}_{\texttt{KeyGen}}(\lambda)$ for a Signature scheme: $\mathrm{\Pi} = (\texttt{KeyGen, Sign, Verify})$.
            
        \item Compute $y = x^{2^T} \mod N \in \mathbb{Z}_N^*$ \textit{efficiently} using the trapdoor evaluation $\texttt{RSW.tdEval}(\textsf{pp}, \textsf{sp} = \phi(N), x)$.
        
        \item Encrypts the $sk$ as $ek = sk + x^{2^T} \mod N$
        
        \item Output public parameter $\textsf{pp} = (N, T, x, pk, ek)$\footnote{\scriptsize The \textsf{pp} can be generate by calling $\texttt{TLPKE.Setup}(\lambda)$.} and secret key $sk$ generated by $\mathrm{\Pi}$ ($sk$ can only be accessed by the \texttt{SLS.Sign}).
    \end{enumerate}
    
    \noindent
    $\bullet$ $\texttt{SLS.Sign}(m, r, sk)$
    \begin{enumerate}
        \item Compute $M = \mathcal{H}(m \parallel r)$.
        \item Compute a signature $\sigma = \mathrm{\Pi}_{\texttt{Sign}}(sk, M)$.
        \item Output a short-lived signature ($\sigma, r$).
    \end{enumerate}
    
    \noindent
    $\bullet$ $\texttt{SLS.ForgeSign}(\textsf{pp}, m, r)$
    \begin{enumerate}
        \item Compute $M = \mathcal{H}(m \parallel r)$.
        \item Call and extract $(y, sk) = \texttt{TLPKE.Eval}(\textsf{pp})$. 
        \item Compute a forge signature $\sigma = \mathrm{\Pi}_{\texttt{Sign}}(sk, M)$\footnote{\scriptsize Forge signature $\sigma$ can be computed in time not much more than the sequentiality bound \textit{\textbf{exactly}} $T$ even on a parallel computer with $\textsf{poly}(\log T, \lambda)$ processors. Therefore, our SLS is Tight Short-Lived Signature.}.
        \item Output a forge short-lived signature $(\sigma, r)$
    \end{enumerate}
    
    \noindent
    $\bullet$ $\texttt{SLS.Verify}(\textsf{pp}, m, r, \sigma)$
    \begin{enumerate}
        \item Compute $M = \mathcal{H}(m \parallel r)$.
        \item Check that $\mathrm{\Pi}_{\texttt{Verify}}(pk, M, \sigma)$\footnote{\scriptsize The signature verification algorithm can be computed in $O(1)$ time.}.
    \end{enumerate}
\end{ProblemSpecBox}

\begin{theorem}[Tight Short-Lived Signatures]\label{theorem:slsde}
Assuming that $\mathcal{H}$ is a random oracle, \texttt{\emph{RSW}} is the repeated point squaring assumption, and \texttt{\emph{TLPKE}} is a time-lock public key encryption scheme (see \emph{Algorithm~\ref{algo:tlpke}}), it holds that the protocol \texttt{\emph{SLS}} (\emph{Algorithm}~\ref{algorithm:sls}) is a tight short-lived signature scheme.
\end{theorem}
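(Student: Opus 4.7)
My plan is to establish each of the three required properties of Definition~\ref{def:sls} — correctness, $T$-time existential unforgeability, and indistinguishability — as a separate reduction, drawing on the correctness and unforgeability of the underlying signature scheme $\mathrm{\Pi}$, the random-oracle model for $\mathcal{H}$, and the repeated-squaring assumption (Assumption~\ref{assump:assump1}). Correctness is essentially syntactic: \texttt{SLS.Sign} and \texttt{SLS.Verify} compute the same $M = \mathcal{H}(m \parallel r)$, so the signature $\sigma = \mathrm{\Pi}_{\texttt{Sign}}(sk, M)$ is accepted by $\mathrm{\Pi}_{\texttt{Verify}}(pk, M, \sigma)$ by the correctness of $\mathrm{\Pi}$. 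Indistinguishability follows by inspecting Algorithm~\ref{algorithm:sls}: both \texttt{SLS.Sign} and \texttt{SLS.ForgeSign} output $\sigma = \mathrm{\Pi}_{\texttt{Sign}}(sk, M)$ with the \emph{same} secret key $sk$ and the \emph{same} input $M$; the only difference is whether $sk$ is used directly or recovered via \texttt{TLPKE.Eval}, and the correctness of \texttt{TLPKE} (Sec.~\ref{sec:tlpke}) guarantees these two values of $sk$ coincide. Hence the two distributions over $\sigma$ coincide and the distinguishing advantage is $0$, modulo the deterministic/probabilistic caveat already flagged after Definition~\ref{definition:slsindisnt}.

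For unforgeability I would proceed by a hybrid argument reducing any successful forger $(\mathcal{A}, \mathcal{A}')$ to either a violation of \texttt{RSW} or a forgery against $\mathrm{\Pi}$. In the first hybrid I replace the public parameter $ek = sk + x^{2^T} \bmod N$ by $ek' = sk + u \bmod N$ for a uniformly random $u \in \mathbb{Z}_N$; any noticeable change in the forger's success probability would yield an algorithm that distinguishes $x^{2^T} \bmod N$ from uniform within parallel time less than $T$ on $\textsf{poly}(\log T, \lambda)$ processors, contradicting Assumption~\ref{assump:assump1}. In this modified experiment $ek'$ is statistically independent of $sk$, so the offline advice $\alpha$ output by $\mathcal{A}$ from $\textsf{pp}$ carries no information about $sk$. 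Because the beacon $r$ is drawn from $\mathcal{R}$ \emph{after} $\alpha$ is fixed and $\mathcal{H}$ is a random oracle, the challenge $M^* = \mathcal{H}(m^* \parallel r)$ is uniform and independent of $\alpha$; producing $\sigma^*$ with $\mathrm{\Pi}_{\texttt{Verify}}(pk, M^*, \sigma^*) = \textsf{accept}$ is then exactly a forgery against $\mathrm{\Pi}$ on a fresh random message, which occurs with negligible probability. Tightness of the forge direction is immediate: \texttt{SLS.ForgeSign} invokes \texttt{TLPKE.Eval}, whose cost is dominated by the $T$ sequential squarings of \texttt{RSW.Eval}, so the forger runs in \emph{exactly} $T$ sequential steps on $\textsf{poly}(\log T, \lambda)$ processors, matching the tightness claim of Table~\ref{table:slscompare}.

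The main obstacle is the very first hybrid. Assumption~\ref{assump:assump1} is stated only in its computational form (``$x^{2^T}$ is hard to \emph{compute}''), whereas the argument above requires a \emph{decisional} flavor (``$x^{2^T}$ is hard to \emph{distinguish} from uniform''). I would bridge this either by strengthening the assumption to the standard decisional variant used throughout the time-lock puzzle literature, or by replacing the additive blinding with a construction that admits a computational-to-statistical reduction — for instance, defining $ek = sk \oplus \mathcal{H}'(y)$ for a second random oracle $\mathcal{H}'$, so that $\mathcal{H}'(y)$ is uniform in the ROM unless the adversary queries it on the correct $y$, which by Assumption~\ref{assump:assump1} would require $T$ sequential squarings. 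Once this gap is closed, the remaining chain — \texttt{RSW}-indistinguishability followed by unforgeability of $\mathrm{\Pi}$ in the ROM — is routine.
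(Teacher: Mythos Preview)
Your proposal is correct and tracks the paper's own proof sketch closely: correctness and indistinguishability are argued exactly as the paper does (both \texttt{Sign} and \texttt{ForgeSign} invoke $\mathrm{\Pi}_{\texttt{Sign}}$ on the same $sk$ and the same $M$, so the outputs coincide), and unforgeability is reduced to the time-lock security of the encrypted key together with the random-oracle model. The paper's sketch is terser than yours --- it simply invokes ``$T$-IND-CPA Security'' of \texttt{TLPKE} as a black box and the random oracle for $\mathcal{H}$, without spelling out the hybrid or the reduction to $\mathrm{\Pi}$'s unforgeability --- so your explicit two-step hybrid (swap $x^{2^T}$ for uniform, then forge against $\mathrm{\Pi}$) is a faithful unpacking of what the paper asserts. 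The gap you flag between the \emph{computational} statement of Assumption~\ref{assump:assump1} and the \emph{decisional} step your hybrid needs is real and is precisely what the paper hides behind the undefined ``$T$-IND-CPA'' label; your two proposed fixes (assume the decisional variant, or hash $y$ through a second random oracle before masking $sk$) are the standard remedies and either would make the sketch go through.
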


\begin{proof}[Proof Sketch]
The correctness of the \texttt{SLS} scheme is proven by the correctness of the underlying time-lock public key encryption \texttt{TLPKE}.
Indistinguishability is trivial as the signing and forgery produce the exact signature using the underlying signature scheme, given that the \texttt{TLPKE.Eval} (key extraction) and \texttt{TLPKE.Decrypt} (decryption) algorithms of the underlying \texttt{TLPKE} are deterministic. 
The $T$-Time Existential Unforgeability is a direct result of the sequentiality and security ($T$-IND-CPA Security) property of the underlying \texttt{TLPKE} and modeling $\mathcal{H}$ as a random oracle.
\end{proof}

\subsubsection*{Experimental Results}
We use \textsf{Python} to implement \texttt{RSW} primitive (and hence our proposed TLPKE and tight SLS). 
The experiments are performed using a Windows 11 system with Intel(R) Core(TM) i5-1035G1 CPU @1.00GHz with 8 GB RAM. 
Note that \texttt{RSW} is the underlying required primitive of our repeated squaring-based TLPKE and tight SLS. Hence, in this poster, we do not provide detailed simulation results for TLPKE and tight SLS, focusing instead on profiling the underlying workhorse primitive.

In Figure~\ref{fig:rsw}, we show the experimental results for \texttt{RSW} evaluation. The \texttt{RSW.tdEval} (Figure~\ref{fig:rswtdeval}) run time changes linearly with the security parameter $\lambda$ (the bit length of $N$ is derived from the bit length of $\lambda$). As shown in Figure~\ref{fig:rsweval}, the time taken to compute the \texttt{RSW.Eval} increases with an increase in the number of exponentiations. Changes in  time $T$ yield a great variation in the evaluation time.

\begin{figure}[ht]
     \centering
     \begin{subfigure}[b]{0.45\textwidth}
         \centering
         \includegraphics[width=\textwidth]{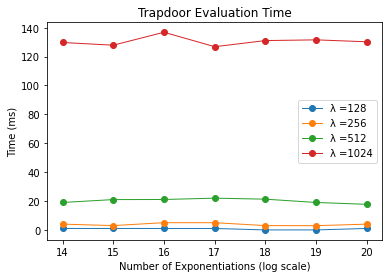}
         \caption{\scriptsize Trapdoor Evaluation Time.}
         \label{fig:rswtdeval}
     \end{subfigure}
     \hfill
     \begin{subfigure}[b]{0.45\textwidth}
         \centering
         \includegraphics[width=\textwidth]{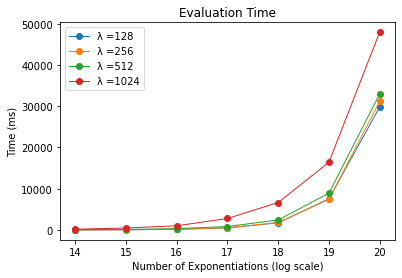}
         \caption{\scriptsize Evaluation Time.}
         \label{fig:rsweval}
     \end{subfigure}
     \caption{Trapdoor Evaluation and $T$-Sequential Evaluation of the Repeated Squaring \texttt{RSW}. $j$ is labelled as ``Number of Exponentiations'', $T = 2^j$.}
     \label{fig:rsw}
\end{figure}

\section{Future Work}

We conclude with an open problem: Arun et al.~\cite{DBLP:journals/iacr/ArunBC22} define \textit{reusable forgeability} property in the context of short-lived proofs (see Sec.~4.1 in~\cite{DBLP:journals/iacr/ArunBC22}), which ensure that one slow computation for a random beacon value (say $r$) enables efficiently forging a proof for any statement (say $x$) without performing a full additional slow computation. Furthermore, Arun et al.~\cite{DBLP:journals/iacr/ArunBC22} extend reusable forgeability in the context of short-lived signatures and describe a construction (see Sec.~8.3 in~\cite{DBLP:journals/iacr/ArunBC22}). \textit{In the near future, we hope to construct an efficient \textbf{tight} reusable and forgeable short-lived signature scheme.}


\bibliographystyle{splncs04}
\bibliography{arxiv-ref}

\end{document}